\tikzstyle{every path}=[arrows=-latex]
\tikzstyle{every loop}=[arrows=-latex]
\tikzstyle{accepting}=[double distance=2pt]
\newcommand{\defAs}{\:=\!\!_{\mbox{\tiny Def}}\:}
\newcommand{\ceil}[1]{\lceil #1 \rceil}
\newcommand{\floor}[1]{\lfloor #1 \rfloor}
\newcommand{\AND}{\mathrel{\&}}
\newcommand{\NOT}{\mathop{\sim}}
\newcommand{\onelen}{\text{\tt 1-len}}
\newcommand{\id}{\text{\tt id}}
\newcommand{\bpstyle}[1]{\mathsf{#1}}
\title{On a compact encoding of the swap automaton}
\author{Kimmo Fredriksson\inst{1} \and Emanuele Giaquinta\inst{2}}
\institute{School of Computing, University of Eastern Finland
	  \email{kimmo.fredriksson@uef.fi} \and
          Department of Computer Science, University of Helsinki, Finland
	  \email{emanuele.giaquinta@cs.helsinki.fi}}
\begin{document}

\maketitle

\begin{abstract}
Given a string $P$ of length $m$ over an alphabet $\Sigma$ of size $\sigma$, a
swapped version of $P$ is a string derived from $P$ by a series of
local swaps, i.e., swaps of adjacent symbols, such that each symbol
can participate in at most one swap.
We present a theoretical analysis of the nondeterministic finite
automaton for the language $\bigcup_{P'\in\Pi_P}\Sigma^*P'$ (swap
automaton for short), where $\Pi_P$ is the set of swapped versions of
$P$. Our study is based on the bit-parallel simulation of the same
automaton due to Fredriksson, and reveals an interesting combinatorial
property that links the automaton to the one for the language $\Sigma^*P$.
By exploiting this property and the method presented by Cantone et al.
(2010), we obtain a bit-parallel encoding of the swap automaton which
takes $O(\sigma^2\ceil{k/w})$ space and allows one to simulate the
automaton on a string of length $n$ in time $O(n\ceil{k/w})$, where
$\ceil{m/\sigma}\le k\le m$.
\end{abstract}

\section{Introduction}

The \emph{Pattern Matching with Swaps} problem (Swap Matching problem,
for short) is a well-studied variant of the classic Pattern Matching
problem. It consists in finding all occurrences, up to character
swaps, of a pattern $P$ of length $m$ in a text $T$ of length $n$,
with $P$ and $T$ sequences of characters over a common finite
alphabet $\Sigma$ of size $\sigma$.
More precisely, the pattern is said to match the text at a given
location $j$ if adjacent pattern characters can be swapped, if
necessary, so as to make it identical to the substring of the text
ending (or, equivalently, starting) at location $j$. All swaps are
constrained to be disjoint, i.e., each character can be involved at
most in one swap.

The Swap Matching problem was introduced in 1995 as one of the open
problems in nonstandard string
matching~\cite{DBLP:dblp_conf/cpm/Muthukrishnan95}. The first result
that improved over the naive $O(nm)$-time bound is due to Amir et
al.~\cite{DBLP:journals/jal/AmirALLL00}, who presented an
$O(nm^\frac{1}{3}\log m)$-time algorithm for binary alphabets and
described how to reduce the case of a general alphabet to that of a
binary one with a $O(\log\sigma)$-time overhead. The best theoretical
result to date is due to Amir \emph{et
  al.}~\cite{DBLP:journals/iandc/AmirCHLP03}. Their algorithm runs in
time $O(n\log m)$ for binary alphabets and can also solve the case of
general alphabets in time $O(n\log m\log \sigma)$ by using again the
alphabet reduction technique of Amir \emph{et
  al.}~\cite{DBLP:journals/jal/AmirALLL00}. Both solutions are based
on reducing the problem to convolutions. Note that this problem can
also be solved using more general algorithms for Approximate String
Matching~\cite{DBLP:journals/csur/Navarro01}, albeit with worse
bounds.

There also exist different practical solutions, based on word-level
parallelism. To our knowledge, the first one is due to
Fredriksson~\cite{Fredriksson00}, who presented a
generalization of the nondeterministic finite automaton (NFA) for the
language $\Sigma^*P$ (prefix automaton) for the Swap Matching problem and a fast method
to simulate it using bit-parallelism~\cite{BYG92}. The resulting
algorithm runs in $O(n\ceil{m/w})$-time and uses $O(\sigma\ceil{m/w})$
space, where $w$ is the machine word size in bits. In the same paper
Fredriksson also presented a variant of the BNDM
algorithm~\cite{DBLP:journals/jea/NavarroR00}, based on the
generalization of the NFA for the language of the suffixes of $P$ (suffix automaton),
which achieves sublinear time on average and runs in
$O(nm\ceil{m/w})$-time in the worst-case. In $2008$ Iliopoulos and
Rahman presented a variant of Shift-Or for this problem, based on a
Graph-Theoretic model~\cite{DBLP:conf/sofsem/IliopoulosR08}. Their
algorithm runs in time $O(n\ceil{m/w}\log m)$ and uses
$O(m\ceil{m/w})$ space (the $\log m$ term can be removed at the price of $O(\sigma^2\ceil{m/w})$ space).
The improvement over the algorithm by
Fredriksson is that the resulting bit-parallel simulation is simpler,
in that it requires fewer bitwise operations.
Later, Cantone and Faro presented an algorithm based on dynamic
programming that runs in time $O(n\ceil{m/w})$ and requires
$O(\sigma\ceil{m/w})$ space~\cite{DBLP:conf/sofsem/CantoneF09}.
Subsequently Campanelli et al. presented a variant of the BNDM
algorithm based on the same approach which runs in
$O(nm\ceil{m/w})$-time in the
worst-case~\cite{DBLP:conf/iwoca/CampanelliCF09}.

In~\cite{DBLP:journals/iandc/CantoneFG12} Cantone et al. presented a
technique to encode the prefix automaton in $O(\sigma^2\ceil{k/w})$
space and simulate it on a string of length $n$ in $O(n\ceil{k/w})$
time, where $\ceil{m/\sigma}\le k\le m$. In this paper we extend this
result to the NFA described in~\cite{Fredriksson00}. First, we present
a theoretical analysis of this NFA, from which the correctness of the
bit-parallel simulation presented in the same paper follows. We then
show that, by exploiting the properties of this NFA that we reveal in
the following, we can solve the Swap Matching problem in time
$O(n\ceil{k/w})$ and space $O(\sigma^2\ceil{k/w})$, where
$\ceil{m/\sigma}\le k\le m$, using the method presented
in~\cite{DBLP:journals/iandc/CantoneFG12}. Our result also applies,
with small changes, to the case of the generalized suffix automaton
for the Swap Matching problem.

\section{Notions and Basic Definitions}
Given a finite alphabet $\Sigma$ of size $\sigma$, we denote by $\Sigma^{m}$, with $m
\geq 0$, the collection of strings of length $m$ over $\Sigma$ and put
$\Sigma^{*} = \bigcup_{m\in \mathbb{N}} \Sigma^{m}$.
We represent a string $P \in \Sigma^{m}$ as
an array $P[0\,..\,m-1]$ of characters of $\Sigma$ and write
$|P| = m$ (in particular, for $m=0$ we obtain the empty string
$\varepsilon$). Thus, $P[i]$ is the $(i+1)$-st character of $P$, for
$0\le i< m$, and $P[i\,\ldots\,j]$ is the substring of $P$ contained
between its $(i+1)$-st and $(j+1)$-st characters, inclusive, for $0\le i \le j <
m$.
For any two strings $P$ and $P'$, we write $PP'$ to denote the
concatenation of $P$ and $P'$.

Given a string $P\in \Sigma^m$, we indicate with
$\mathcal{A}(P)=(Q,\Sigma,\delta,q_0,F)$ the nondeterministic finite
automaton (NFA) for the language $\Sigma^*P$ of all words in $\Sigma^{*}$
ending with an occurrence of $P$ (prefix automaton for short), where:
\begin{itemize}
    \item
    $Q=\{q_0,q_1,\ldots, q_m\}$ \qquad ($q_{0}$ is the initial state)

    \item the transition function $\delta: Q \times \Sigma
    \longrightarrow \mathscr{P}(Q)$ is defined by:
    $$
    \delta(q_{i},c) \defAs \begin{cases}
    \{q_{0},q_{1}\} & \text{if } i = 0 \text{ and } c = P[0]\\
    \{q_{0}\} & \text{if } i = 0 \text{ and } c \neq P[0]\\
    \{q_{i+1}\} & \text{if } 1 \le i < m \text{ and } c = P[i]\\
    \emptyset & \text{otherwise}
    \end{cases}
    $$
    \item
    $F=\{q_m\}$ \qquad ($F$ is the set of final states).
\end{itemize}

The valid configurations $\delta^*(q_{0},S)$ which are reachable by the automaton
$\mathcal{A}(P)$ on input $S \in \Sigma^{*}$ are defined recursively as
follows:
$$
    \delta^*(q_{0},S)=_{\mathit{Def}}\begin{cases}
    \{q_0\} & \text{if $S=\varepsilon$,}\\
    \bigcup_{q' \in \delta^{*}(q_{0},S')}\delta(q',c) & \text{if
    $S=S'c$,
    for some $c\in \Sigma$ and $S' \in \Sigma^{*}$.}
    \end{cases}
$$

\begin{definition}
A \emph{swap permutation} for a string $P$ of length $m$ is a
permutation $\pi : \{0,...,m-1\} \rightarrow \{0,...,m-1\}$ such that:
\begin{enumerate}[(a)]
    \item if $\pi(i) = j$ then $\pi(j) = i$ (characters are swapped);

    \item for all $i$, $\pi(i) \in \{i-1, i, i+1\}$ (only adjacent
    characters are swapped);

    \item if $\pi(i) \neq i$ then $P[\pi(i)]
    \neq P[i]$ (identical characters are not swapped).
\end{enumerate}
\end{definition}

For a given string $P$ and a swap permutation $\pi$ for $P$, we write
$\pi(P)$ to denote the \emph{swapped version} of $P$, namely $\pi(P) =
P[\pi(0)]P[\pi(1)]\ldots P[\pi(m-1)]$.

\begin{definition}[Pattern Matching with Swaps Problem]
  Given a text $T$ of length $n$ and a pattern $P$ of length $m$, find
  all locations $j \in \{m-1,...,n-1\}$ for which there exists a swap
  permutation $\pi$ of $P$ such that $\pi(P)$ matches $T$ at location
  $j$, i.e. $P[\pi(i)] = T[j-m+i+1]$, for $i=0...m-1$.
\end{definition}

Finally, we recall the notation of some bitwise infix operators on
computer words, namely the bitwise \texttt{and} ``$\&$'', the bitwise
\texttt{or} ``$|$'', the \texttt{left shift} ``$\ll$'' operator (which
shifts to the left its first argument by a number of bits equal to its
second argument), and the unary bitwise \texttt{not} operator
``$\NOT$''.

\subsection{$1$-factorization encoding of the prefix automaton}
A $1$-factorization $\boldsymbol{u}$ of size $k$ of a string $P$ is a sequence
$\langle u_1,u_2,\ldots, u_k\rangle$ of nonempty substrings of $P$
such that:
\begin{enumerate}[(a)]
\item $P=u_1 u_2\dots u_k$\,;
\item each factor $u_{j}$ in $\boldsymbol{u}$ contains at most
\emph{one} occurrence of any of the characters in the alphabet
$\Sigma$, for $j=1,\ldots,k$\,.
\end{enumerate}
The following result was presented in~\cite{DBLP:journals/iandc/CantoneFG12}:
\begin{theorem}[cf.~\cite{DBLP:journals/iandc/CantoneFG12}]\label{thm:1-factorization}\label{thm:1-encoding}
  Given a string $P$ of length $m$ and a $1$-factorization of $P$ of length
  $\ceil{m/\sigma}\le k\le m$, we can encode the automaton
  $\mathcal{A}(P)$ in $O(\sigma^2\ceil{k/w})$ space and simulate it in
  time $O(n\ceil{k/w})$ on a string of length $n$.
\end{theorem}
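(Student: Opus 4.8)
The plan is to start from the classical Shift-And (BYG) simulation of $\mathcal{A}(P)$ and to compress its state vector by means of the $1$-factorization $\boldsymbol{u}=\langle u_1,\ldots,u_k\rangle$. Recall that Shift-And stores a configuration $\delta^{*}(q_0,S)$ as an $m$-bit word in which the bit of $q_i$ (for $1\le i\le m$) is set iff $P[0\,..\,i-1]$ is a suffix of $S$, with $q_0$ always active and treated separately. The first step is the structural fact on which everything rests: \emph{in every reachable configuration, and for every factor $u_j=P[s_j\,..\,e_j]$ of $\boldsymbol{u}$, at most one of the states $q_{s_j+1},\ldots,q_{e_j+1}$ is active}. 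Indeed, if $q_{i+1}$ and $q_{i'+1}$ with $s_j\le i<i'\le e_j$ were both active on some input $S$, then $P[0\,..\,i]$ and $P[0\,..\,i']$ would both be suffixes of $S$, hence $P[0\,..\,i]$ a suffix of $P[0\,..\,i']$, forcing $P[i]=P[i']$ --- impossible, as $i$ and $i'$ are distinct positions of the same $1$-factor. (Note also that each $1$-factor contains at most one occurrence of every symbol, so $|u_j|\le\sigma$, $m\le\sigma k$, and the hypothesis $\lceil m/\sigma\rceil\le k$ holds for any $1$-factorization.)

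The second step --- which is also the part I expect to be the main obstacle --- is to turn this into an encoding. Immediately after reading a symbol $c$, the unique active state (if any) inside $u_j$ must sit at the position occupied by $c$ in $u_j$, because any active interior state $q_{s_j+1+\ell}$ forces $P[s_j+\ell]$ to be the last symbol of $S$. Hence a configuration, \emph{together with the last symbol read}, is faithfully represented by a single $k$-bit word $D$ in which bit $j$ is set iff $u_j$ has an active interior state (necessarily at the position of $c$ in $u_j$). I would then establish that the Shift-And transition projects onto this encoding as
\[
D \;\longleftarrow\; \bigl(D \mathbin{\&} M^{=}_{c,b}\bigr)\;\mathbin{|}\;\bigl((D\ll 1)\mathbin{\&} M^{\ne}_{c,b}\bigr)\;\mathbin{|}\;S_b ,
\]
where $b$ is the newly read symbol, $M^{=}_{c,b}$ is the $k$-bit mask of the factors in which $c$ is immediately followed by $b$ (the ``interior advance''), $M^{\ne}_{c,b}$ is the $k$-bit mask of the factors $u_j$ such that $u_{j-1}$ ends with $c$ and $u_j$ begins with $b$ (the ``boundary crossing''), and $S_b$ is the one-bit seed for $u_1$, set iff $b=P[0]$. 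Verifying this identity unwinds into a short case analysis on the local position $\ell$ of $b$ in $u_j$ --- the case $\ell=0$ governed by the completion of $u_{j-1}$, the case $\ell>0$ by $c$ immediately preceding $b$ inside $u_j$ --- together with the routine check of the boundary situations (factors of length one, the self-loop at $q_0$, and the empty initial configuration $D=0$).

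The rest is bookkeeping. The masks $M^{=}_{c,b}$, $M^{\ne}_{c,b}$ and the seeds $S_b$ depend only on the ordered pair $(c,b)\in\Sigma\times\Sigma$, so there are $O(\sigma^2)$ of them, each occupying $\lceil k/w\rceil$ words; precomputing them from $\boldsymbol{u}$ gives the $O(\sigma^2\lceil k/w\rceil)$ space bound. One transition then costs $O(\lceil k/w\rceil)$ word operations, and an occurrence of $P$ ending at text position $t$ is reported with $O(1)$ additional work --- by testing whether bit $k$ of $D$ is set and $T[t]=P[m-1]$, i.e.\ whether $q_m$ is active. Correctness follows by induction on the length of the scanned text prefix, using the two structural facts above to show that $D$ tracks exactly the configurations $\delta^{*}(q_0,\cdot)$ of $\mathcal{A}(P)$. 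Summing over the $n$ text symbols yields the claimed $O(n\lceil k/w\rceil)$ time.
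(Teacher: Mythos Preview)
Your proof is correct. Note, however, that the present paper does not itself prove this theorem: it is quoted from \cite{DBLP:journals/iandc/CantoneFG12}, and only the encoding $(\bpstyle{D},a)$ is briefly recalled --- without the transition rule --- for later use. Your reconstruction matches that encoding exactly: the structural fact ``at most one active state per factor, and it must be $q_{j,c}$ where $c$ is the last symbol read'' is precisely what the paper packages in the notation $q_{i,a}$, and your pair $(\bpstyle{D},c)$ is the paper's pair. The update formula you derive, with the two $\sigma^{2}$-indexed mask families $M^{=}_{c,b}$ (intra-factor advance) and $M^{\ne}_{c,b}$ (boundary crossing) plus the seed $S_b$, is a faithful reconstruction of the Cantone--Faro--Giaquinta transition; the short case analysis on whether the predecessor of $q_{j,b}$ lies in $u_j$ or in $u_{j-1}$ is exactly the right one, and it yields the stated $O(\sigma^{2}\lceil k/w\rceil)$ space and $O(n\lceil k/w\rceil)$ time with no gap.
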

We briefly recall how the encoding of Theorem~\ref{thm:1-encoding}
works. A $1$-factorization $\langle u_1,u_2,\ldots, u_k\rangle$ of $P$
induces a partition $\{Q_{1},\ldots,Q_{k}\}$ of the set $Q \setminus
\{q_{0}\}$ of states of the automaton $\mathcal{A}(P)$, where
$$
Q_{i} \defAs
\left\{q_{r_{i}+1},
\ldots,q_{r_{i+1}}\right\}\,, \text{ for }
i=1,\ldots,k\,,
$$
and $r_{j} = |u_{1}u_{2}\ldots u_{j-1}|$, for $j=1,\ldots,k+1$. We
denote with $q_{i,a}$ the unique state in $Q_i$ with an incoming
transition labeled by $a$, if such a state exists; otherwise $q_{i,a}$
is undefined.
The configuration $\delta^{*}(q_{0},Sa)$ of
$\mathcal{A}(P)$ on input $Sa$ can then be encoded by the pair
$(\bpstyle{D},a)$, where $\bpstyle{D}$ is the bit-vector of size $k$
such that $\bpstyle{D}[i]$ is set iff $q_{i,a} \in
\delta^{*}(q_{0},Sa)$.
We also denote with $\id(i, u_i[j]) = r_i + j$ the position of
symbol $u_i[j]$ in $P$, for $0\le j\le |u_i|-1$. Equivalently,
$\id(i,u_i[j])$ is the index of state $q_{i,u_i[j]}$ in the original
automaton.

\section{An analysis of the swap automaton}

Let $P$ be a pattern of length $m$ and let $\Pi_P$ be the set
including all the swapped versions of $P$. The swap automaton of $P$
is the nondeterministic finite automaton that recognizes all the words
in $\Sigma^*$ ending with a swapped version of $P$. Formally, it is
the NFA $\mathcal{A}_{\pi}(P)=(Q,\Sigma,\delta,q_0,F)$, where:
\begin{itemize}
\item $Q=\{q_0,q_1\ldots,q_{2m-1}\}$
    \item the transition function $\delta: Q \times \Sigma
    \longrightarrow \mathscr{P}(Q)$ is defined by:
    $$
    \delta(q_{i},c) \defAs \begin{cases}
    \{q_{0},q_{1}\} & \text{if } i = 0 \text{ and } c = P[0]\\
    \{q_{0},q_{m+1}\} & \text{if } i = 0 \text{ and } c = P[1]\\
    \{q_{0}\} & \text{if } i = 0 \text{ and } c \neq P[0] \text{ and } c \neq P[1]\\
    \{q_{i+1}\} & \text{if } 1 \le i < m \text{ and } c = P[i]\\
    \{q_{i+m+1}\} & \text{if } 1 \le i < m-1 \text{ and } c = P[i+1]\\
    \{q_{i-m+1}\} & \text{if } m+1 \le i < 2m \text{ and } c = P[i-m-1]\\
    \emptyset & \text{otherwise}
    \end{cases}
    $$
\item $F=\{q_m\}$
\end{itemize}
The language accepted by $\mathcal{A}_{\pi}(P)$ is
$\mathcal{L}(\mathcal{A}_{\pi}(P))=\bigcup_{P'\in\Pi_P}\Sigma^*P'$. An
example of this automaton for the string $cagca$ is depicted in
Fig.~\ref{fig:example}.
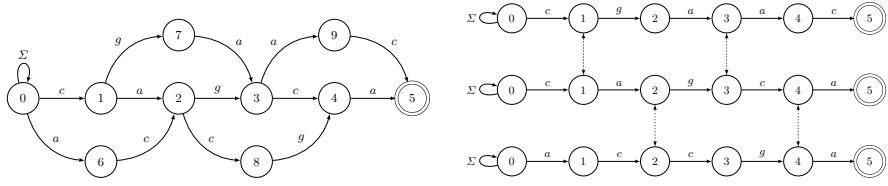
\begin{figure}[t]
\begin{center}
\subfigure{
\resizebox{0.47\textwidth}{!}{
\begin{tikzpicture}[node distance=2cm,auto]

\node[state] (q_0) {$0$};
\node[state] (q_1) [right of=q_0] {$1$};
\node[state] (q_2) [right of=q_1] {$2$};
\node[state] (q_3) [right of=q_2] {$3$};
\node[state] (q_4) [right of=q_3] {$4$};
\node[state,accepting] (q_5) [right of=q_4] {$5$};
\node[state] (q_6) [below=0.8cm of q_1] {$6$};
\node[state] (q_7) [above=0.8cm of q_2] {$7$};
\node[state] (q_8) [below=0.8cm of q_3] {$8$};
\node[state] (q_9) [above=0.8cm of q_4] {$9$};

\path[->] (q_0) edge node {$c$} (q_1)
                edge [loop above] node {$\Sigma$} ()
          (q_1) edge node {$a$} (q_2)
          (q_2) edge node {$g$} (q_3)
          (q_3) edge node {$c$} (q_4)
          (q_4) edge node {$a$} (q_5);

\path[->] (q_0) edge [bend right=35] node {$a$} (q_6)
          (q_6) edge [bend right=35] node {$c$} (q_2)
          (q_2) edge [bend right=35] node {$c$} (q_8)
          (q_8) edge [bend right=35] node {$g$} (q_4)
          (q_1) edge [bend left=35] node {$g$} (q_7)
          (q_7) edge [bend left=35] node {$a$} (q_3)
          (q_3) edge [bend left=35] node {$a$} (q_9)
          (q_9) edge [bend left=35] node {$c$} (q_5);

\end{tikzpicture}
}
}
\subfigure{
\resizebox{0.47\textwidth}{!}{
\begin{tikzpicture}[node distance=2cm,auto]
\node[state] (q_0) {$0$};
\node[state] (q_1) [right of=q_0] {$1$};
\node[state] (q_2) [right of=q_1] {$2$};
\node[state] (q_3) [right of=q_2] {$3$};
\node[state] (q_4) [right of=q_3] {$4$};
\node[state,accepting] (q_5) [right of=q_4] {$5$};

\path[->] (q_0) edge node {$c$} (q_1)
                edge [loop left] node {$\Sigma$} ()
          (q_1) edge node {$a$} (q_2)
          (q_2) edge node {$g$} (q_3)
          (q_3) edge node {$c$} (q_4)
          (q_4) edge node {$a$} (q_5);

\node[state] (a_0) [above of=q_0] {$0$};
\node[state] (a_1) [right of=a_0] {$1$};
\node[state] (a_2) [right of=a_1] {$2$};
\node[state] (a_3) [right of=a_2] {$3$};
\node[state] (a_4) [right of=a_3] {$4$};
\node[state,accepting] (a_5) [right of=a_4] {$5$};

\path[->] (a_0) edge node {$c$} (a_1)
                edge [loop left] node {$\Sigma$} ()
          (a_1) edge node {$g$} (a_2)
          (a_2) edge node {$a$} (a_3)
          (a_3) edge node {$a$} (a_4)
          (a_4) edge node {$c$} (a_5);

\node[state] (b_0) [below of=q_0] {$0$};
\node[state] (b_1) [right of=b_0] {$1$};
\node[state] (b_2) [right of=b_1] {$2$};
\node[state] (b_3) [right of=b_2] {$3$};
\node[state] (b_4) [right of=b_3] {$4$};
\node[state,accepting] (b_5) [right of=b_4] {$5$};

\path[->] (b_0) edge node {$a$} (b_1)
                edge [loop left] node {$\Sigma$} ()
          (b_1) edge node {$c$} (b_2)
          (b_2) edge node {$c$} (b_3)
          (b_3) edge node {$g$} (b_4)
          (b_4) edge node {$a$} (b_5);

\path[->] (q_1) edge [dotted] node {} (a_1)
          (a_1) edge [dotted] node {} (q_1)
          (q_2) edge [dotted] node {} (b_2)
          (b_2) edge [dotted] node {} (q_2)
          (q_3) edge [dotted] node {} (a_3)
          (a_3) edge [dotted] node {} (q_3)
          (q_4) edge [dotted] node {} (b_4)
          (b_4) edge [dotted] node {} (q_4);
\end{tikzpicture}
}
}
\caption{(a) The swap automaton for the pattern $cagca$; (b) The decomposition of the swap automaton for the pattern $cagca$.}
\label{fig:example}
\end{center}
\end{figure}
Compared to the NFA $\mathcal{A}(P)$ for the language $\Sigma^*P$, this
automaton has $m-1$ additional states and $2m-2$ additional
transitions. To our knowledge, this automaton was described for the
first time by Fredriksson in~\cite{Fredriksson00}. In the same paper, Fredriksson
presented an efficient simulation of this automaton based on
word-level parallelism. Let $\phi(S)=S'$ be the string of length $|S|$ defined as follows:
$$
S'[i]=
\begin{cases}
S[i] & \text{if } i \ge \floor{m/2}2 \\
S[i-1] & \text{if } i\bmod 2 = 1 \\
S[i+1] & \text{if } i\bmod 2 = 0 \\
\end{cases}
$$
The method is based on the decomposition of the swap automaton for $P$
into the three automata $\mathcal{A}(P)$, $\mathcal{A}(P_e)$ and
$\mathcal{A}(P_o)$, where $P_e=P[0]\phi(P[1\,\ldots\,m-1])$ and
$P_o=\phi(P)$. In the case of the string $cagca$ we have that $P_e =
cgaac$ and $P_o=accga$. The corresponding automata are depicted in
Fig.~\ref{fig:example}. Observe that all the automata have exactly
$m+1$ states. We denote with $q^1_i$, $q^2_i$ and $q^3_i$ the $i$-th
state of the automata $\mathcal{A}(P)$, $\mathcal{A}(P_e)$ and
$\mathcal{A}(P_o)$, respectively. Likewise for the corresponding
transition functions. Given a string $S$, let $D^i_j(S)$ be the set recursively defined as
$$D^i_j(S) =
\begin{cases}
\bigcup_{q\in D^i_{j-1}(S)\cup C^i_{j-1}(S)}\delta_i(q, S[j]) & \text{if } 1\le j\le |S|-1 \\
\delta_i(q^i_0,S[0]) & \text{if } j = 0.
\end{cases}
$$
for $i=1,\ldots,3$, where
$$
C^1_j(S) = \{q^1_i\ |\ (i\bmod 2 = 1\wedge q^2_i\in D^2_j(S))\vee (i\bmod 2 = 0\wedge q^3_i\in D^3_j(S))\}\,,
$$
$$
C^2_j(S) = \{q^2_i\ |\ (i\bmod 2 = 1\wedge q^1_i\in D^1_j(S))\}\,,
$$
$$
C^3_j(S) = \{q^3_i\ |\ (i\bmod 2 = 0\wedge q^1_i\in D^1_j(S))\}\,,
$$
for $j=0,\ldots,|S|-1$.
The idea is to simulate the three automata simultaneously
on $S$. However, at each iteration, we also activate some states of
each automaton depending on the configuration of the others. More
precisely, we activate state $q^2_i$ if $i$ is odd and state $q^1_i$
is active, and viceversa. Similarly, we activate state $q^3_i$ if $i$
is even and state $q^1_i$ is active, and viceversa.
The sets $D^i_j$ encode the described configurations.
Now, consider the automaton $\mathcal{A}_{\pi}(P)$. It is not hard to
see that the following Lemma holds:

\begin{lemma}\label{lemma:swap-prop}
  In a simulation of the automaton $\mathcal{A}_{\pi}(P)$ on a given
  string $S$, state $q_i$ is active at the $j$-th
  iteration, i.e., $q_i\in \delta^*(q_0, S[0\,\ldots\,j])$, iff one of
  the following three conditions hold:
\begin{enumerate}
\item $1\le i\le m$ and $q^1_i\in D^1_j(S)$;
\item $m < i < 2m$, $i-m$ is even and $q^2_{i-m}\in D^2_j(S)$;
\item $m < i < 2m$, $i-m$ is odd and $q^3_{i-m}\in D^3_j(S)$.
\end{enumerate}
\end{lemma}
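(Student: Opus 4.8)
The plan is to prove the biconditional by induction on $j$, simultaneously establishing the three equivalences and the correspondence between the $D^i_j$ sets and the $C^i_j$ sets. The base case $j=0$ is handled by unfolding the definitions: $\delta^*(q_0, S[0]) = \delta(q_0, S[0])$, which by the transition function of $\mathcal{A}_\pi(P)$ equals $\{q_0\}$ together with $q_1$ if $S[0]=P[0]$ and $q_{m+1}$ if $S[0]=P[1]$. On the other side, $D^1_0(S) = \delta_1(q^1_0, S[0])$ contains $q^1_1$ iff $S[0]=P[0]$, $D^3_0(S) = \delta_3(q^3_0, S[0])$ contains $q^3_1$ iff $S[0] = P_o[0] = P[1]$ (since $\phi$ swaps positions $0$ and $1$), and since $1$ is odd this matches condition 3 for $i=m+1$; one checks that $D^2_0$ contributes nothing relevant because $P_e[0]=P[0]$ only reactivates $q^2_1$ via the $C$-mechanism, not via $\delta_2$ from the start state giving a "crossed" state. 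I would tabulate these few cases explicitly.

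For the inductive step, assume the statement holds for $j-1$ and analyze how state $q_i$ of $\mathcal{A}_\pi(P)$ becomes active at step $j$. Split on the range of $i$. If $1 \le i \le m$: $q_i$ is active at step $j$ iff some predecessor was active at step $j-1$ and reads $S[j]$ into $q_i$; the predecessors are $q_{i-1}$ (main chain, needs $S[j]=P[i-1]$) and $q_{i-1+m}$ (the return edge from the swap branch, defined when $m+1 \le i-1+m < 2m$, i.e. $2 \le i \le m$, needs $S[j]=P[(i-1+m)-m-1]=P[i-2]$). By the induction hypothesis, $q_{i-1}$ active corresponds to $q^1_{i-1}\in D^1_{j-1}(S)$, and $q_{i-1+m}$ active corresponds — depending on the parity of $i-1$ — to $q^2_{i-1}\in D^2_{j-1}(S)$ (if $i-1$ even) or $q^3_{i-1}\in D^3_{j-1}(S)$ (if $i-1$ odd). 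The crux is to match this against the definition of $D^1_j(S)$, which activates $q^1_i$ from $D^1_{j-1}\cup C^1_{j-1}$: the main-chain predecessor matches the $D^1_{j-1}$ part directly, and the swap-return predecessor must match the $C^1_{j-1}$ part, i.e. $q^1_{i-1}\in C^1_{j-1}(S)$ iff ($i-1$ odd and $q^2_{i-1}\in D^2_{j-1}$) or ($i-1$ even and $q^3_{i-1}\in D^3_{j-1}$). Here one must check that the parity conditions and the labels line up: $q^1_{i-1}\xrightarrow{S[j]}q^1_i$ needs $S[j]=P[i-1]$, while the corresponding branch in $\mathcal{A}_\pi$ needed $S[j]=P[i-2]$, so one verifies that for the relevant parity, $P_e$ resp. $P_o$ has been constructed so that the $(i-1)$-st edge of $\mathcal{A}(P_e)$ or $\mathcal{A}(P_o)$ carries label $P[i-2]$ exactly when the original $\mathcal{A}(P)$ edge into $q^1_{i-1}$... — actually the cleaner bookkeeping is to note that $q^2_{i-1}$ or $q^3_{i-1}$ being in $D^2_{j-1}$ resp. $D^3_{j-1}$ was itself caused, one step earlier, by a $C$-activation from $q^1_{i-1}$, and to track the two-step history. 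This is the step I expect to be the main obstacle: getting the parity arithmetic and the $\phi$-induced label shifts to agree requires care, and it is easy to be off by one in the index or to mishandle the boundary cases $i=m-1, i=m$ where the swap branch either does or does not exist.

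For the range $m < i < 2m$: $q_i$ is active at step $j$ iff its unique predecessor in $\mathcal{A}_\pi(P)$, namely $q_{i-m}$ (via the edge labeled $P[i-m]$, defined for $1\le i-m<m-1$, i.e. $m+1\le i<2m-1$) — together with the possibility that $i=m+1$ is reached from $q_0$ on label $P[1]$ — was active at step $j-1$. By induction, $q_{i-m}$ active means $q^1_{i-m}\in D^1_{j-1}(S)$. I then show this is equivalent, when $i-m$ is even, to $q^2_{i-m}\in D^2_j(S)$: indeed $D^2_j(S)$ includes $q^2_{i-m}$ precisely when $q^2_{i-m}\in C^2_{j-1}$... no — rather, the $C$-mechanism activates $q^2_{i-m}$ at step $j-1$ (since $q^1_{i-m}\in D^1_{j-1}$ and, for the right parity, $i-m$ odd triggers $C^2$, $i-m$ even triggers $C^3$), and then one $\delta$-step later it has moved on; so the matching is again between adjacent iterations. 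I would organize this half symmetrically to the first, reducing everything to the defining recurrences for $D^2_j, D^3_j, C^2_{j-1}, C^3_{j-1}$ and the transition tables of $\mathcal{A}(P_e), \mathcal{A}(P_o)$, and verifying that the labels $P_e[i-m-1]$ resp. $P_o[i-m-1]$ coincide with the label $P[i-m]$ required by $\mathcal{A}_\pi(P)$ on the edge into $q_i$ — which follows directly from the definition of $\phi$ and of $P_e, P_o$. Finally I would separately dispatch the special starting transitions (the edges out of $q_0$ to $q_1$ and $q_{m+1}$) since $q_0$ is always active, folding them into the $i=1$ and $i=m+1$ cases of the inductive step. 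Throughout, the accepting state $q_m$ needs no special treatment: it is covered by case 1 with $i=m$.
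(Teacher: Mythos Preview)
The paper gives no proof of this lemma beyond the remark ``It is not hard to see that the following Lemma holds'', so there is nothing to compare against directly; your induction on $j$ is the natural way to fill the gap, and the case split you sketch is the right shape.

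That said, the parity mismatch you stumble over mid-argument is real and not merely a bookkeeping nuisance. Using conditions~2 and~3 as induction hypothesis, the swap state $q_{m+(i-1)}$ being active at step $j-1$ corresponds to $q^2_{i-1}\in D^2_{j-1}$ when $i-1$ is \emph{even} and to $q^3_{i-1}\in D^3_{j-1}$ when $i-1$ is \emph{odd}; but the definition of $C^1_{j-1}$ injects $q^1_{i-1}$ from $D^2_{j-1}$ only when $i-1$ is \emph{odd} and from $D^3_{j-1}$ only when $i-1$ is \emph{even}. The parities are reversed, so $C^1_{j-1}$ cannot supply the predecessor you need for the $\delta_1$-step into $q^1_i$. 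Concretely, for $P=cagca$ and $S=ac$ one has $q_2\in\delta^*(q_0,ac)$ in $\mathcal{A}_{\pi}(P)$ via $q_0\to q_6\to q_2$, yet a direct computation gives $D^1_1(S)=\{q^1_0,q^1_1\}$, so condition~1 fails (and conditions~2,~3 do not apply to $i=2$). The lemma as literally stated is therefore not correct: a main-chain state $q_i$ reached by the closing step of a swap is witnessed not in $D^1_j$ but only via $C^1_j$ (equivalently, by $q^2_i\in D^2_j$ with $i$ odd or $q^3_i\in D^3_j$ with $i$ even). If you replace condition~1 by $q^1_i\in D^1_j(S)\cup C^1_j(S)$, your induction does go through: for instance, $q^2_{i-1}\in D^2_{j-1}$ with $i-1$ even yields $q^2_i\in D^2_j$ because $P_e[i-1]=P[i-2]=S[j]$, and then $i$ odd gives $q^1_i\in C^1_j$; the other parity is symmetric via $P_o$. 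Separately, in your second case the predecessor of $q_i$ for $m<i<2m$ is $q_{i-m-1}$, not $q_{i-m}$; once that index is corrected the labels and parities line up directly from the definition of $\phi$.
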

Hence, to simulate $\mathcal{A}_{\pi}(P)$ it is enough to simulate the
automata $\mathcal{A}(P)$, $\mathcal{A}(P_e)$ and $\mathcal{A}(P_o)$,
and compute the sets $D^i_j$. To this end,
Fredriksson uses the well known technique of
bit-parallelism~\cite{BYG92} to encode each automaton in $O(\sigma\ceil{m/w})$ space.
For a given string $T$ of length $n$, the simulation
of the three automata on $T$ can be then computed in time
$O(n\ceil{m/w})$, since the number of automata is constant. For the
details concerning the bit-parallel simulation
see~\cite{Fredriksson00}.

We now show how to exploit Lemma~\ref{lemma:swap-prop} to devise an improved
algorithm for the Swap Matching problem.
Our result will be a combination of Lemma~\ref{lemma:swap-prop} and
Theorem~\ref{thm:1-factorization}. The idea is to encode each
automaton using a $1$-factorization of the corresponding string.
However, for the simulation to work, we must be able to compute the sets $C^i_j(S)$ in constant time (per word),
which is not trivial using the $1$-factorization encoding. The first
prerequisite for a constant time computation is the following property:
\begin{property}\label{prop:bit}
For any pair of states $(q^1_i,q^2_i)$ or $(q^1_i,q^3_i)$, the two
states in the pair map onto the same bit position in the bit-vector
encoding of the corresponding automaton.
\end{property}
For this to hold, given a sequence of factorizations
$\boldsymbol{u}^1$, $\boldsymbol{u}^2$, \ldots,
$\boldsymbol{u}^{\ell}$ we must have that
\begin{enumerate}
\item $|\boldsymbol{u}^i| = |\boldsymbol{u}^j|$, for any $1\le i,j\le \ell$
\item $|u^i_l| = |u^j_l|$, for any $1\le i,j\le \ell$ and $1\le l\le |\boldsymbol{u}^i|$
\end{enumerate}
These conditions are not satisfied in general by the minimal
$1$-factorizations of the strings. For example, the minimal
$1$-factorizations of $cagca$, $cgaac$ and $accga$ are $\langle cag,
ca\rangle$, $\langle cga, ac\rangle$ and $\langle ac, cga\rangle$, and
the last factorization does not satisfy condition $2$.
Let $\onelen(S, s) = i$, where $i$ is the length such that all the
symbols in $S[s\,\ldots\,s+i-1]$ are distinct and either $s+i-1=|S|-1$ or
$S[s+i]$ occurs in $S[s\,\ldots\,s+i-1]$, for $s=0,\ldots,|S|-1$.
We introduce the following definition:
\begin{definition}
Given a sequence $\mathcal{S}$ of strings $S_1$, $S_2$, \ldots,
$S_{\ell}$ of the same length, we define the $1$-collection of
$\mathcal{S}$ as the sequence of $1$-factorizations
$\boldsymbol{u}^1, \boldsymbol{u}^2,\ldots,\boldsymbol{u}^{\ell}$ of
length $k$, where $\boldsymbol{u}^i=\langle u^i_1, u^i_2, \ldots
u^i_k\rangle$, such that
\begin{enumerate}[(a)]
\item $S_i = u^i_1u^i_2\ldots u^i_k$;
\item $|u^i_j| = \min\limits_{S\in\mathcal{S}}\onelen(S, \sum_{l=1}^{j-1} |u^i_l|)$.
\end{enumerate}
Observe that the $1$-collection of $\mathcal{S}$ satisfies conditions $1$ and $2$.
\end{definition}
For example, the $1$-collection of $cagca$, $cgaac$ and $accga$ is
$\langle ca, g, ca\rangle$, $\langle cg, a, ac\rangle$ and $\langle
ac, c, ga\rangle$. Indeed, we can encode the automata
$\mathcal{A}(P)$, $\mathcal{A}(P_e)$ and $\mathcal{A}(P_o)$ using
Theorem~\ref{thm:1-factorization} and the $1$-collection of $P, P_e,
P_o$ in space $O(\sigma^2\ceil{k/w})$, where $k$ is the size of any
$1$-factorization in the $1$-collection of $P, P_e, P_o$. By
definition, the $1$-collection of $P, P_e, P_o$ satisfies conditions $1$ and $2$, and thus
Property~\ref{prop:bit} holds.

Before continuing, we first bound the size $k$ of the
factorizations in the $1$-collection of $P$, $P_e$ and $P_o$.
\begin{lemma}
  Let $k'$ be the size of a minimal $1$-factorization of $P$ and let
  $k$ be the size of any factorization in the $1$-collection of $P$,
  $P_e$ and $P_o$. Then we have $k\le \min(3 k' - 2, m)$.
\end{lemma}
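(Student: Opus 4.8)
\noindent\emph{Proof idea.} The plan is to use the \emph{greedy} $1$-factorization of $P$ as the witnessing minimal one: define boundaries $0=b_0<b_1<\dots<b_{k'}=m$ by $b_l=b_{l-1}+\onelen(P,b_{l-1})$; a standard exchange argument shows this factorization uses the fewest factors, hence exactly $k'$ of them. Write $I_l=\{b_{l-1},\dots,b_l-1\}$ for the set of positions of its $l$-th factor, so the $I_l$ partition $\{0,\dots,m-1\}$ and $P$ is injective on each $I_l$. On the other side, unfolding its definition, the $1$-collection of $P,P_e,P_o$ is produced by a single greedy cursor: $c_0=0$, $c_j=c_{j-1}+\min\{\onelen(P,c_{j-1}),\onelen(P_e,c_{j-1}),\onelen(P_o,c_{j-1})\}$, and $c_k=m$; thus $k$ is the number of cursor steps. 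Since the factors are nonempty and sum to $m$, the bound $k\le m$ is free, and the content of the lemma is the inequality $k\le 3k'-2$.

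The crux is a structural observation about $P_e$ and $P_o$. Writing $P_e[p]=P[\mu_e(p)]$ and $P_o[p]=P[\mu_o(p)]$, the maps $\mu_e,\mu_o$ are permutations of $\{0,\dots,m-1\}$, each a product of disjoint adjacent transpositions, so $\mu_e(p),\mu_o(p)\in\{p-1,p,p+1\}$. Hence for every position $p$ in $I_l$ other than its two \emph{ports} $b_{l-1}$ and $b_l-1$ we have $\mu_e(p),\mu_o(p)\in I_l$; since $\mu_e,\mu_o$ are injective and $P$ is injective on $I_l$, the characters $P_e[p]$ (resp.\ $P_o[p]$) are pairwise distinct as $p$ ranges over $I_l\setminus\{b_{l-1},b_l-1\}$. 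Consequently every repeated character inside the word $P_e[b_{l-1}\,..\,b_l-1]$ (or $P_o[b_{l-1}\,..\,b_l-1]$) involves a port, and once the cursor has moved strictly past $b_{l-1}$ the only surviving repeats in those words go through $b_l-1$. I also plan to record that the extremal ports are inert: $\mu_e$ fixes $0$ and $\mu_o$ sends $0$ inside $I_1$, and symmetrically $m-1$ is fixed by, or sent inside $I_{k'}$ by, each of $\mu_e,\mu_o$; so $b_0=0$ (the left port of $I_1$) and $b_{k'}-1=m-1$ (the right port of $I_{k'}$) never create a repeat.

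With this in hand I count $k=\sum_{l=1}^{k'}n_l$, where $n_l$ is the number of cursor positions lying in $I_l$ (each factor being charged to the interval containing its left endpoint). The claim is $n_l\le 3$ in general and $n_l\le 2$ for $l\in\{1,k'\}$. Fix $l$ and let $c_a$ be the first cursor position that is $\ge b_{l-1}$. If $c_a>b_{l-1}$, the left port lies behind the cursor, so for $X\in\{P_e,P_o\}$ the word $X[c_a\,..\,b_l-1]$ has repeats only through $b_l-1$; hence $\onelen(X,c_a)\ge (b_l-1)-c_a$ while $\onelen(P,c_a)=b_l-c_a$, so from $c_a$ the cursor reaches at least $b_l-1$ and at most one further length-$1$ step finishes $I_l$, giving $n_l\le 2$. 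If instead $c_a=b_{l-1}$, the first step out of $b_{l-1}$ may be short (bounded only by the repeat the left port can induce), but the next cursor position is strictly right of $b_{l-1}$, and the case just treated applies from there, so $n_l\le 3$. For $l\in\{1,k'\}$, inertness of the extremal ports leaves at most one repeat-inducing port — the right port of $I_1$, the left port of $I_{k'}$ — and rerunning the argument with that port gives $n_l\le 2$. Summing, $k\le 2+3(k'-2)+2=3k'-2$ when $k'\ge 2$, and $k\le 1=3k'-2$ when $k'=1$ (the single interval has both ports inert). Combined with $k\le m$, this yields $k\le\min(3k'-2,m)$.

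The step I expect to fight with is the bookkeeping of the third paragraph, precisely because $1$-collection factors need not align with the greedy boundaries of $P$: a factor can straddle $b_{l-1}$ (so the cursor enters $I_l$ strictly after $b_{l-1}$) or straddle $b_l$ on the way out, and I must be sure this never produces a fourth cursor position inside a single $I_l$ — morally it cannot, since entering later only shrinks the room available for premature stops — and that straddling factors get charged to exactly one interval. I also have to run the short parity case analysis deciding, for each port, which of $P_e$ and $P_o$ carries a foreign character there, and to verify carefully that the extremal ports are genuinely inert even when combined with the opposite port of the same interval. All of that is routine once the structural observation of the second paragraph is in place; that observation is the heart of the proof.
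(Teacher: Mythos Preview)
Your proposal is correct and follows essentially the same strategy as the paper: both take the greedy minimal $1$-factorization of $P$, argue that each of its factors contributes at most three factors to the $1$-collection (at most two for the first and last), and sum to obtain $3k'-2$. Your port/interior framing via the permutations $\mu_e,\mu_o$ is a clean repackaging of the paper's parity case analysis, and your explicit handling of the cursor entering $I_l$ strictly past $b_{l-1}$ makes rigorous a step the paper leaves somewhat informal.
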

\begin{proof}
Let $\langle u_1,u_2,\ldots,u_{k'}\rangle$ be the (greedy) minimal
$1$-factorization of $P$ such that $|u_j| = \onelen(P,
\sum_{l=1}^{j-1} |u_l|)$, for $j=1,\ldots,k'$. Let $s =
\sum_{l=1}^{j-1}|u_l|$ for a given $j$, and suppose that
$|u_j|=\onelen(P, s) = i$, so that $P[s+i]$ occurs in
$P[s\,\ldots\,s+i-1]$. If $s+i$ is even, then $P_e[s+i-1] = P[s+i]$ and
$P_o[s+i+1] = P[s+i]$; viceversa if $s+i$ is odd. Suppose that $s+i$
is even (the other case is analogous).

If $s$ is even then $P_o[s\,\ldots\,s+i-1]$ is a permutation of
$P[s\,\ldots\,s+i-1]$, which implies $\onelen(P_o, s)\ge i$.
Instead, in the case of $P_e$, $P_e[s+1\,\ldots\,s+i-2]$ is a
permutation of $P[s+1\,\ldots\,s+i-2]$. This implies that
$\onelen(P_e, s+1) \ge i-2$.

If $s$ is odd then $P_e[s\,\ldots\,s+i-2]$ is a permutation of
$P[s\,\ldots\,s+i-2]$, which implies that $\onelen(P_e, s)\ge i-1$.
Instead, in the case of $P_o$, $P_o[s+1\,\ldots\,s+i-1]$ is a
permutation of $P[s+1\,\ldots\,s+i-1]$. This implies that
$\onelen(P_e, s+1) \ge i-1$.

Observe that $\onelen(S,s)\ge i$ implies $\onelen(S,s+1)\ge i-1$.
In both cases, we assume pessimistically that
$
\min_{S\in\{P,P_e,P_o\}}\onelen(S, s)=1\,,
$
$
\min_{S\in\{P,P_e,P_o\}}\onelen(S, s+1)=i-2\,,
$
and
$
\min_{S\in\{P,P_e,P_o\}}\onelen(S, s+i-1)=1\,.
$
This arrangement is compatible with the constraints described above.

In this way each factor $u_j$ covers three factors in the
$1$-collection of $P$, $P_e$ and $P_o$. However, a finer analysis
reveals that $u_1$ and $u_k$ can cover two factors only.
Indeed, in the case of $u_1$ we have that $P_e[0\,\ldots\,i-2]$ is a
permutation of $P[0\,\ldots\,i-2]$ and $P_o[0\,\ldots\,i-1]$ is a
permutation of $P[0\,\ldots\,i-1]$, so we can assume
$\min_{S\in\{P,P_e,P_o\}}\onelen(S, 0)=i-1$ and
$\min_{S\in\{P,P_e,P_o\}}\onelen(S, i-1)=1$. Instead, in the case of
$u_k$ we have that $P_o[s\,\ldots\,m-1]$ is a permutation of
$P[s\,\ldots\,m-1]$ and $P_e[s+1\,\ldots\,m-1]$ is a permutation of
$P[s+1\,\ldots\,m-1]$, if $s$ is even, viceversa if $s$ is odd. So we
can assume $\min_{S\in\{P,P_e,P_o\}}\onelen(S, s)=1$ and
$\min_{S\in\{P,P_e,P_o\}}\onelen(S, s+1)=m-s-1$. The claim then
follows.

\qed
\end{proof}

We now describe a property of the $1$-collection of strings $P$, $P_e$
and $P_o$ that will be the key for the constant time computation of
$C^i_j$:
\begin{lemma}\label{lemma:mapping}
  Let $\boldsymbol{u}^1$, $\boldsymbol{u}^2$ and $\boldsymbol{u}^3$ be
  the $1$-collection of $P$, $P_e$ and $P_o$. Then, the following
  facts hold:
\begin{itemize}
\item $\id^1(i,P[j]) = \id^2(i,P[j-1])$ \\$\id^2(i,P_e[j]) = \id^1(i,P_e[j-1])$ if $j\bmod 2 = 0$
\item $\id^1(i,P[j]) = \id^3(i,P[j-1])$ \\$\id^3(i,P_o[j]) = \id^1(i,P_o[j-1])$ if $j\bmod 2 = 1$
\end{itemize}
for $1\le i\le k$ and $\max(r^1_i, 1)\le j\le r^1_i+|u^1_i|-1$.
\end{lemma}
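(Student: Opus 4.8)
The plan is to reduce each of the four claimed equalities to an elementary statement about the three strings $P$, $P_e$, $P_o$ and the positions they assign to a symbol, exploiting the key feature of a $1$-collection that its three factorizations are positionally aligned. First I would note that, since the $1$-collection satisfies condition $2$, i.e.\ $|u^1_l|=|u^2_l|=|u^3_l|$ for every $l$, a trivial induction gives $r^1_i=r^2_i=r^3_i$ for all $i$ (the common prefix length $r_i$ already determines $|u^s_i|=\min_{S}\onelen(S,r_i)$ for each $s$). Hence the $i$-th factors $u^1_i,u^2_i,u^3_i$ all span exactly the positions $r_i,\dots,r_i+|u^1_i|-1$, and for $Q\in\{P,P_e,P_o\}$ with matching index $s$ and any $p$ in that range, $Q[p]$ occurs exactly once in $u^s_i$, at offset $p-r_i$, so that $\id^s(i,Q[p])=p$. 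This is the only place where the $1$-collection structure enters.

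Second, I would record the position-wise action of $\phi$. From $P_o=\phi(P)$ and $P_e=P[0]\,\phi(P[1\,\ldots\,m-1])$ one obtains: $P_o[p]=P[p+1]$ for even $p$ and $P_o[p]=P[p-1]$ for odd $p$; and $P_e[0]=P[0]$, $P_e[p]=P[p-1]$ for even $p\ge 2$, and $P_e[p]=P[p+1]$ for odd $p$ — with the single exception, in each case, of the index left fixed when $\phi$ is applied to a string of odd length, which is always the last index of that string. Informally, $P_e$ performs the adjacent swaps on the pairs $(1,2),(3,4),\dots$ and $P_o$ those on $(0,1),(2,3),\dots$

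Third, I would fix $i$ and $j$ with $\max(r_i,1)\le j\le r_i+|u^1_i|-1$, so that $j\ge 1$, $j\le m-1$ (positions of $P$), and $P[j-1],P[j]$ both lie in factor $i$; one checks the hypothesis also keeps $j-1$ and $j$ clear of the indices fixed by $\phi$, so the formulas of the previous paragraph apply to every index used. If $j$ is even, then $P_e[j]=P[j-1]$ and $P_e[j-1]=P[j]$: the first says $P[j-1]$ occupies position $j$ of $P_e$, which lies in $u^2_i$, so $\id^2(i,P[j-1])=j=\id^1(i,P[j])$; the second says $P_e[j-1]$ occupies position $j$ of $P$, which lies in $u^1_i$, so $\id^1(i,P_e[j-1])=j=\id^2(i,P_e[j])$. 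If $j$ is odd, the identical argument with $P_o$ replacing $P_e$ (using $P_o[j]=P[j-1]$ and $P_o[j-1]=P[j]$) gives $\id^3(i,P[j-1])=j=\id^1(i,P[j])$ and $\id^1(i,P_o[j-1])=j=\id^3(i,P_o[j])$, which is the remaining half of the statement.

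The only delicate point is the boundary behaviour of $\phi$: the index $0$ that $P_e$ always fixes, and the last index fixed by $\phi$ on an odd-length argument. I expect the real work to be the routine verification that the hypothesis $\max(r_i,1)\le j\le r_i+|u^1_i|-1\le m-1$ always keeps the touched indices $j-1$ and $j$ away from these exceptional positions, so that the clean swap formulas apply without caveats; everything else is direct substitution.
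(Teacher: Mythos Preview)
Your proof is correct and follows the same line as the paper's: first establish $\id^s(i,Q[j])=j$ from the positional alignment of the three factorizations, then use the character identities $P_e[j]=P[j-1]$, $P[j]=P_e[j-1]$ (and their $P_o$ analogues) to transfer symbols between strings. The paper's proof is essentially a compressed version of yours, stating $P_e[j]=P[j-1]$ and $P[j]=P_e[j-1]$ directly and omitting the explicit boundary analysis of $\phi$ that you spell out.
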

\begin{proof}
  By definition, $id^1(i,P[j]) = j$, for any $i,j$ as above, since
  $\boldsymbol{u}^1$ is a factorization of $P$. Similarly,
  $id^2(i,P_e[j]) = j$, since $r^2_i=r^1_i$ and $|u^2_i|=|u^1_i|$. If
  $j\bmod 2 = 0$, then $P_e[j] = P[j-1]$ and $P[j] = P_e[j-1]$ so that
  $id^2(i,P_e[j]) = id^2(i,P[j-1])$ and $id^1(i,P[j]) =
  id^1(i,P_e[j-1])$. The case of $j\bmod 2 = 1$ is analogous with
  $P_o$ and $id^3$ in place of $P_e$ and of $id^2$, respectively.

\qed
\end{proof}
We now present how to compute the sets $C^2_j$ and $C^3_j$. The case of $C^1_j$ 
 is analogous. More precisely, we need to compute the
$1$-factorization encoding of $C^2_j$ and $C^3_j$, given the pair
$(\bpstyle{D}^1,T[j])$ encoding the set $D^1_j(T)$.
Let $\bpstyle{E}(c)$ be a bit-vector of $k$ bits such that bit $i$
is set in $\bpstyle{E}(c)$ iff $\id^1(i,c)$ is even, for any
$c\in\Sigma$.
First we compute the bit-vector $\bpstyle{D}'$ such that bit $i$ is set iff
bit $i$ is set in $\bpstyle{D}^1$ and $\id^1(i,T[j])$ is
even. This can be done in constant time by performing a bitwise and of
$\bpstyle{D}^1$ with $\bpstyle{E}(T[j])$. Observe that the
pair $(\bpstyle{D}',T[j])$ encodes the set $ \{q^1_i\ |\ (i\bmod 2 =
1\wedge q^1_i\in D^1_j(T))\} $. We claim that the pair
$(\bpstyle{D}',T[j-1])$ encodes the set $C^2_j$. This follows by
Lemma~\ref{lemma:mapping} by observing that if bit $i$ is set in
$\bpstyle{D}'$ then $\id^1(i,T[j]) = \id^2(i,T[j-1])$.
The case of $C^3_j$ is symmetric, i.e., the pair $(D^1\AND \NOT\bpstyle{E}(T[j]), T[j-1])$ encodes $C^3_j$.

Given a string of length $n$, we can then simulate the swap automaton
using Lemma~\ref{lemma:swap-prop} in time $O(n\ceil{k/w})$. Hence, we
obtain the following result:
\begin{theorem}
  Given a string $P$ of length $m$, we can encode the automaton $\mathcal{A}_{\pi}(P)$ in
  $O(\sigma^2\ceil{k/w})$ space, where
  $\ceil{m/\sigma}\le k\le m$, and simulate it in time
  $O(n\ceil{k/w})$ on a string of length
  $n$.
\end{theorem}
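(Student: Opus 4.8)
The plan is to assemble the final theorem from the three ingredients already established: Lemma~\ref{lemma:swap-prop}, which reduces the simulation of $\mathcal{A}_{\pi}(P)$ to the simultaneous simulation of $\mathcal{A}(P)$, $\mathcal{A}(P_e)$ and $\mathcal{A}(P_o)$ together with the auxiliary sets $C^i_j$; Theorem~\ref{thm:1-factorization}, which encodes each individual prefix automaton in $O(\sigma^2\ceil{k/w})$ space with $O(n\ceil{k/w})$-time simulation; and the preceding discussion showing that, using the $1$-collection of $P$, $P_e$, $P_o$, the pairs $(\bpstyle{D}^1,T[j])$ encoding $D^1_j$ can be converted in constant time (per machine word) into the $1$-factorization encodings of $C^2_j$, $C^3_j$ (and, symmetrically, $C^1_j$) via a bitwise \texttt{and} with the precomputed masks $\bpstyle{E}(c)$ and a relabelling of the associated symbol from $T[j]$ to $T[j-1]$.

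First I would fix a $1$-collection $\boldsymbol{u}^1,\boldsymbol{u}^2,\boldsymbol{u}^3$ of $P$, $P_e$, $P_o$ of common size $k$; by the length bound $k\le\min(3k'-2,m)$ proved above, and since $k'\le m$ while $k'\ge\ceil{m/\sigma}$, we have $\ceil{m/\sigma}\le k\le m$, matching the claimed range. The space accounting: each of the three automata is encoded by Theorem~\ref{thm:1-factorization} in $O(\sigma^2\ceil{k/w})$ space; the three masks $\bpstyle{E}(c)$, one $k$-bit vector per symbol, add $O(\sigma\ceil{k/w})$; so the total is $O(\sigma^2\ceil{k/w})$. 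For the time bound, the main loop over the $n$ text positions does, at each step: a constant number of $1$-factorization transition updates (one per automaton, each $O(\ceil{k/w})$ by Theorem~\ref{thm:1-factorization}), the constant-time-per-word computation of the $C^i_j$ encodings as described, and the updates $D^i_j\leftarrow\bigcup_{q\in D^i_{j-1}\cup C^i_{j-1}}\delta_i(q,S[j])$, which amount to feeding the union $D^i_{j-1}\cup C^i_{j-1}$ — a bitwise \texttt{or} of two $k$-bit vectors — into the transition step. Since Property~\ref{prop:bit} guarantees that paired states $(q^1_i,q^2_i)$ and $(q^1_i,q^3_i)$ occupy the same bit position, the ``activation'' sets $C^i_j$ can be \texttt{or}-ed in directly without any cross-automaton reindexing; each iteration is $O(\ceil{k/w})$, for a total of $O(n\ceil{k/w})$.

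Acceptance is detected exactly as in Lemma~\ref{lemma:swap-prop}: state $q_m$ of $\mathcal{A}_{\pi}(P)$ is active at step $j$ iff $q^1_m\in D^1_j(S)$, i.e.\ iff the final bit of the $\mathcal{A}(P)$-encoding is set after incorporating $C^1_{j-1}$ — so a swap occurrence ending at $j$ is reported whenever that bit fires. The remaining point is that the $D^i_j$ recursion needs $C^i_{j-1}$, the activation set from the \emph{previous} step; but this is exactly what the constant-time conversion produces — note the conversion relabels the symbol from $T[j]$ to $T[j-1]$, which is precisely the symbol available at the point the set is consumed — so the loop is well-formed provided we carry $\bpstyle{D}^1$ (and symmetrically the encodings of $D^2,D^3$) from one iteration to the next.

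I expect the only genuinely delicate point to be making the bookkeeping of the $C^i_j$ computation watertight: one must check that the $1$-factorization encoding the paper works with (a pair $(\bpstyle{D},a)$ where $\bpstyle{D}[i]$ flags $q_{i,a}$) is consistent under the symbol relabelling $T[j]\mapsto T[j-1]$, i.e.\ that Lemma~\ref{lemma:mapping} really licenses reading the \emph{same} bit-vector $\bpstyle{D}'$ against the \emph{shifted} symbol — which is what identities such as $\id^1(i,T[j])=\id^2(i,T[j-1])$ (for the even bits) assert. Once that is granted, and once one observes that the parity masks $\bpstyle{E}(c)$ let us split $D^1_j$ into its odd-index and even-index parts in $O(\ceil{k/w})$ time, everything else is a direct composition of the cited results, and the theorem follows.
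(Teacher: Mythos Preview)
Your proposal is correct and follows essentially the same approach as the paper: the paper's own ``proof'' of this theorem is just the one-sentence summary that the preceding machinery (Lemma~\ref{lemma:swap-prop}, Theorem~\ref{thm:1-factorization}, the $1$-collection construction, and Lemma~\ref{lemma:mapping} for the constant-time computation of the $C^i_j$) already yields the bound, and you have faithfully reconstructed that assembly in more detail than the paper itself gives. One small quibble: your justification of the lower bound $\ceil{m/\sigma}\le k$ via $k'\ge\ceil{m/\sigma}$ is not quite a derivation---the clean reason is simply that every factor in any $1$-factorization has length at most $\sigma$, so any $1$-factorization (in particular those in the $1$-collection) has at least $\ceil{m/\sigma}$ factors; and when you say the union $D^i_{j-1}\cup C^i_{j-1}$ is ``a bitwise \texttt{or} of two $k$-bit vectors'' fed into one transition, note that the two encoding pairs carry different symbols ($T[j-1]$ versus $T[j-2]$), so strictly one applies the transition to each pair separately and \texttt{or}s the results---but this is still $O(\ceil{k/w})$ per step and does not affect the claimed bounds.
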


\bibliographystyle{plain}
\bibliography{swaps}

\end{document}